\newtheorem{proposition}{Proposition}
\newtheorem{corollary}{Corollary}
\title{\LARGE \bf
Formulations and algorithms for the multiple depot, fuel-constrained, multiple vehicle routing problem 
}
\author{Kaarthik Sundar$^{\dagger}$\thanks{$^{\dagger}$ Graduate Student, Dept. of Mechanical Engg., Texas A\&M University,
College Station, TX 77843. \texttt{kaarthiksundar@tamu.edu}},\;
Saravanan Venkatachalam$^{*}$\thanks{$^{*}$ Assistant Professor, Dept. of Industrial and Systems Engg., Wayne State University,
Detroit, MI 48202. \texttt{}},\;
Sivakumar Rathinam$^{\ddagger}$\thanks{$^{\ddagger}$ Assistant Professor, Dept. of Mechanical Engg., Texas A\&M University,
College Station, TX 77843.}} \;
\begin{document}

\maketitle
\thispagestyle{empty}
\pagestyle{empty}

\begin{abstract}
We consider a multiple depot, multiple vehicle routing problem with fuel constraints. We are given a set of targets, a set of depots and a set of homogeneous vehicles, one for each depot. The depots are also allowed to act as refueling stations. The vehicles are allowed to refuel at any depot, and our objective is to determine a route for each vehicle with a minimum total cost such that each target is visited at least once by some vehicle, and the vehicles never run out fuel as it traverses its route. We refer this problem as Multiple Depot, Fuel-Constrained, Multiple Vehicle Routing Problem (FCMVRP). This paper presents four new mixed integer linear programming formulations to compute an optimal solution for the problem. Extensive computational results for a large set of instances are also presented.
\end{abstract}

\begin{keywords}
    fuel constraints; green vehicle routing; electric vehicles; mixed-integer linear programming; unmanned vehicle routing
\end{keywords}

\section{Introduction}
\label{sec:intro}
In this paper, we extend the classic multiple depot, multiple vehicle routing problem (MDMVRP) to include fuel constraints for the vehicles. We are given sets of targets, a set of depots, and a set of vehicles, with each vehicle initially stationed at a distinct depot. The depots also perform the role of refueling stations, and it is reasonable to assume that whenever a vehicle visits a depot, it refuels to its full capacity. Given this, the objective of FCMVRP is to determine a route for each vehicle starting and ending at its corresponding depot such that (i) each target is visited at least once by some vehicle, (ii) no vehicle runs out of fuel as it traverses its path, and (iii) the sum total cost of the routes for the vehicles is minimum. Some of the applications for the FCMVRP are path-planning for Unmanned Aerial Vehicles (UAVs) \cite{Sundar2012, Sundar2014, Levy2014}, routing for electric vehicles based on the locations of recharging stations \cite{Schneider2014, Hiermann2014}, and routing for green vehicles \cite{Erdougan2012}. Some of these application domains are elaborated on the following sections.


\subsection{Path-planning for UAVs \label{subsec:introuav}}
Small UAVs are being used routinely in military applications such as border patrol, reconnaissance, and surveillance expeditions, and civilian applications like remote sensing, traffic monitoring, and weather and hurricane monitoring \cite{Frew2009, Curry2004, ZajkowskiT2006}. Even though there are several advantages due to small platforms for UAVs, there are resource constraints due to their size and limited payload. It may not be possible for a small UAV to complete a surveillance mission before refueling at one of the depots due to the fuel constraints. For example, consider a typical surveillance mission with multiple vehicles each starting at a depot and together are required to monitor a set of targets. To complete this mission, the vehicles might have to start at their respective depot, then visit a subset of targets and reach one of the depots for refueling before starting a new route for the rest of the targets. This can be modeled as a FCMVRP with the depots acting as refueling stations. 

\subsection{Routing problem for green  and electric vehicles \label{subsec:introgvrp}}
Green vehicle routing problem is a variant of the Vehicle Routing Problem (VRP) and was introduced by authors in \cite{Erdougan2012} to account for the challenges associated with operating a fleet of alternate-fuel vehicles (AFVs). The US transportation sector accounts for 28\% of national greenhouse gas emissions \cite{USEPA}. Several efforts over many decades focusing towards the introduction of cleaner fuels (e.g. ultra low sulfur diesel) and efficient engine technologies have lead to reduced emissions and greater mileage per gallon of fuel used. Government organizations, municipalities, and private companies are converting their fleet of vehicles to AFVs either voluntarily to alleviate the environmental impact of fossil based fuels or to meet environmental regulations. For instance, FedEx, in its overseas operations, employs AFVs that run on biodiesel, liquid natural gas, or compressed natural gas. A major challenge that hinders the increase in usage of AFVs is the number of alternate-fuel stations available for refueling. The FCMVRP is a natural problem that arises in such a scenario. An algorithm to compute an optimal solution to the FCMVRP would generate low cost routes for the vehicles, while respecting their fuel constraints. 

Increasing concerns about climate changes and rising green house gas emissions drive the research in sustainable and energy efficient mobility. One such example is the introduction of electrically-powered vehicles. One of the main operational challenges for electric vehicles in transport applications is their limited range and the availability of recharging stations. The number of electric stations in the US is a mere 9,571 with a total of 24,631 charging outlets \cite{USDOE}. Fig. \ref{fig:map} shows a map with the locations of the electric stations in Texas, USA; observe that the distribution of the electric stations is very sparse except in the four major cities Dallas, Houston, Austin, and San Antonio. Successful adaption of electric vehicles will strongly depend on the methods alleviating the range and recharging limitations. If we consider the range and the recharging stations for the electric vehicles as analogues to the fuel capacity and refueling stations of vehicles that run on fossil-based or alternate fuels respectively, then the problem of electric vehicle routing subject to the range constraints and limited availability of electric stations can be modeled as an FCMVRP. Clearly, any feasible solution to the FCMVRP can be used to implement a feasible route for an electric vehicle. 

\begin{figure}
\centering
\includegraphics[scale=0.28]{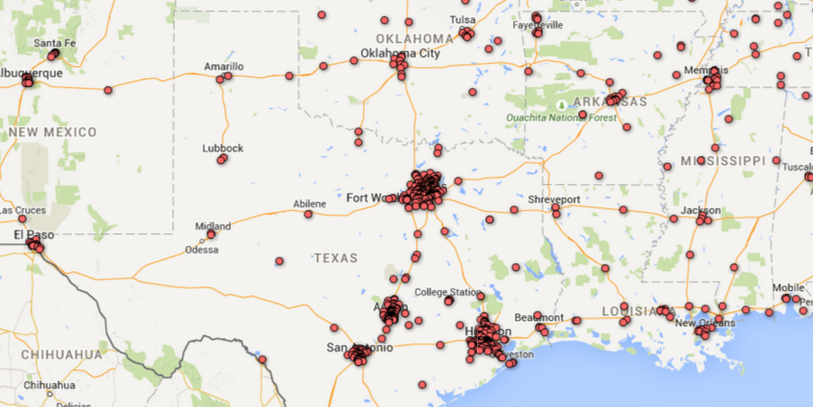}
\caption{Electric station locations in Texas, USA \cite{USDOE}}
 \label{fig:map}
\end{figure}

\section{Related work \label{sec:litreview}} 
The FCMVRP is NP-hard because it contains the VRP as a special case. The existing literature on the FCMVRP is quite scarce. The multiple depot, single vehicle variant of the FCMVRP was first introduced by authors in \cite{Khuller2007}. When the travel costs are symmetric and satisfy the triangle inequality, authors in \cite{Khuller2007} provide an approximation algorithm for this variant. They assume that the minimum fuel required to travel from any target to its nearest depot is at most equal to $F\alpha/2$ units, where $\alpha$ is a constant in the interval $[0,1)$ and $F$ is the fuel capacity of the vehicle. This is a reasonable assumption as, in any case, one cannot have a feasible tour if there is a target that cannot be visited from any of the depots. Using these assumptions, Khuller et al. \cite{Khuller2007} present a $(3(1+\alpha))/(2(1-\alpha))$ approximation algorithm for the problem. Authors in \cite{Sundar2012} formulate this multiple depot single vehicle variant as a mixed-integer linear program and present $k$-opt based exchange heuristics to obtain feasible solutions within $7\%$ of the optimal, on an average. Later, Sundar et al. \cite{Sundar2014} extend the approximation algorithm in $\cite{Khuller2007}$ to the asymmetric case and also present heuristics to solve the asymmetric version of this variant. Furthermore, variable neighborhood search heuristics for FCMVRP with heterogeneous vehicles, \emph{i.e.,} vehicles with different fuel capacities, are presented by Levy et al. \cite{Levy2014}. More recently, an approximation algorithm and heuristics are developed for the FCMVRP by the authors in \cite{Mitchell2015}.

Variants of the classic VRP that are closely related to the FCMVRP include the distance constrained VRP \cite{Laporte1984, Li1992, Kara2010, Kara2011, Nagarajan2012}, the orienteering problem \cite{Fischetti1998, Vansteenwegen2011}, and the capacitated version of the arc routing problem \cite{Ghiani2004, Polacek2008}. The distance constrained VRP is a special case of the FCMVRP with a single vehicle and single depot that can be considered as a fuel station. The FCMVRP is also quite different and more general compared to orienteering problem where one is interested in maximizing the number of targets visited by the vehicle subject to its fuel constraints. Lastly, the arc routing problem is a single depot VRP given a set of intermediate facilities, and the vehicle has to cover a subset of edges along which targets are present. The vehicle is required to collect goods from the targets as it traverses the given set of edges and unloads the goods at the intermediate facilities. The goal of this problem is to find a tour of minimum length that starts and ends at the depot such that the vehicle visits the given subset of edges, and the total amount of goods carried by the vehicle does not exceed the capacity of the vehicle along the tour. One of the key differences between the arc routing problem and the FCMVRP is that there is no requirement that any subset of edges must be visited in the FCMVRP.

The aim of this paper is to introduce and compare four different formulations for the FCMVRP and present branch-and-cut algorithms for the formulations. The first two formulations are arc-based, and the rest are node-based formulations that use the Miller-Tucker-Zemlin (MTZ) constraints \cite{MTZ1960}. The major contributions of this paper are as follow: (1) present four new formulations for the FCMVRP, (2) compare the formulations both analytically and empirically, and (3) through extensive computational experiments, show that instances with maximum of 40 targets are within the computational reach of a branch-and-cut algorithm based on the best of the four formulations. 

The rest of the paper is organized as follows. Sec. \ref{sec:definition} states the formal definition of the problem and introduces notations. In Sec. \ref{sec:formulation}, we develop the four mixed integer linear programming formulations. The first two formulations are arc-based and the rest are node-based formulations \emph{i.e.,} decision variables for enforcing the fuel constraints are introduced for each edge and each target for the arc-based and the node-based formulations, respectively. The linear programming relaxations of the formulations are analytically compared in this section. In Sec. \ref{sec:results}, we present the computational results followed by conclusions and possible extensions. 

\section{Problem definition \label{sec:definition}}
Let $T$ denote the set of targets $\{t_1,\dots,t_n\}$ . Let $D$ denote the set of depots or refueling stations $\{d_1,\dots, d_k\}$; each depot $d_k$ is equipped with a vehicle $v_k$. The FCMVRP is defined on a directed graph $G=(V,E)$ where $V=T\cup D$ and $E$ is the set of edges joining any two vertices in $V$. We assume that $G$ does not contain any self-loops. Each edge $(i,j) \in E$ is associated with a non-negative cost $c_{ij}$ required to travel from vertex $i$ to vertex $j$ and $f_{ij}$, the fuel spent by traveling from $i$ to $j$. It is assumed that the cost of traveling from vertex $i$ to vertex $j$ is directly proportional to the fuel spent in traversing the edge $(i,j)$ \emph{i.e.}, $c_{ij} = K\cdot f_{ij}$ ($c_{ij}$ and $c_{ji}$ may be different, but for the purpose of this paper, we assume $c_{ij} = c_{ji}$). It is also assumed that travel costs satisfy the triangle inequality \emph{i.e.}, for every $i,j,k\in V$, $c_{ij} + c_{jk} \geq c_{ik}$. Furthermore, let $F$ denote the fuel capacity of all the vehicles. The FCMVRP consists of finding a route for each vehicle such that the vehicle $v_k$ starts and ends its route at its depot $d_k$, each target is visited at least once by some vehicle, the fuel required by any vehicle to travel any segment of the route which joins two consecutive depots in the route must be at most equal to $F$, and the sum of the cost of all the edges present in the routes is a minimum.

\section{Mathematical formulations \label{sec:formulation}}
This section presents four formulations for the FCMVRP. The first two formulations are arc based, and the remaining formulations are node based. The arc based and edge based formulations have additional decision variables for each edge and vertex respectively, to impose the fuel constraints. 
%
For any given formulation $\mathcal F$, let $\mathcal F^L$ denote its linear programming relaxation obtained by allowing the integer variables to take continuous values within the lower and upper integer bounds, and $\operatorname{opt}(\mathcal F)$ denote the cost of its optimal solution. 

\subsection{Arc-based formulations \label{subsec:arcbased}}
We first present an arc based formulation $\mathcal F_1$ for the FCMVRP, inspired by the models for standard routing problems \cite{Toth2001, Kara2011}. Each edge $(i,j)\in E$ is associated with a variable $x_{ij}$, which equals $1$ if the edge $(i,j)$ is traversed by the vehicle, and $0$ otherwise. Also, associated with each edge $(i,j)$ is a flow variable $z_{ij}$ which denotes the total fuel consumed by any vehicle as it starts from a depot to the vertex $j$, when the predecessor of $j$ is $i$. Using the above variables, the formulation $\mathcal F_1$ is given as follows:
\begin{flalign}
&(\mathcal F_1) \quad \text{Minimize} \quad \sum_{(i,j)\in E} c_{ij} x_{ij} \notag& \\
&\text{subject to:} \notag & \\
&\sum_{i\in V} x_{di} = \sum_{i\in V} x_{id} \quad \forall \, d\in D,\label{eq:f1-degree-d}& \\
&\sum_{i\in V} x_{ij} = 1 \text{ and } \sum_{i\in V} x_{ji} = 1 \quad \forall \, j \in T, \label{eq:f1-degree-t} &\\
&\sum_{j\in V} z_{ij} - \sum_{j\in V} z_{ji} = \sum_{j\in V} f_{ij} x_{ij} \quad \forall \, i\in T, \label{eq:f1-fuel} &\\
&0 \leq z_{ij} \leq Fx_{ij} \quad \forall \, (i,j) \in E, \label{eq:f1-fuel2} &\\
&z_{di} = f_{di}x_{di} \quad \forall \, i\in T, \, d \in D \label{eq:f1-fuel3},  \text{ and}  &\\
&x_{ij} \in \{0,1\} \quad \forall \, (i,j) \in E \label{eq:f1-integer}.& 
\end{flalign}
In the above formulation the Eqs. \eqref{eq:f1-degree-d} -- \eqref{eq:f1-degree-t} impose the degree constraints on the depots and the targets. The constraints in \eqref{eq:f1-fuel} are the connectivity constraints; they eliminate sub tours of the targets. \eqref{eq:f1-fuel2} and \eqref{eq:f1-fuel3} together impose $0\leq z_{ij} \leq F$ and they ensure that the fuel consumed by the vehicle to travel up to a depot does not exceed the fuel capacity $F$. Finally, the constraints in Eqs. \eqref{eq:f1-integer} impose the binary restrictions on the variables.   

Now, we present another arc-based formulation $\mathcal F_2$ which is a strengthened version of $\mathcal F_1$. The following proposition is a modified version of the Prop. 1 presented in \cite{Kara2011} for the distance constrained vehicle routing problem; it strengthens the bounds given by the constraints in \eqref{eq:f1-fuel2}. 
\begin{proposition} \label{prop:strengthen} The inequalities in \eqref{eq:f1-fuel2} can be strengthened as follows:
\begin{flalign}
& z_{ij} \leq (F - t_j)x_{ij} \quad \forall j\in T,\, (i,j) \in E \label{eq:f2-fuel1} & \\
& z_{id} \leq Fx_{id} \quad \forall i \in T \text{ and } d \in D \label{eq:f2-fuel2} & \\
& z_{ij} \geq (s_i + f_{ij}) x_{ij} \quad \forall i\in T, \, (i,j) \in E \label{eq:f2-fuel3} &
\end{flalign}
where, $t_i = \min_{d\in D} f_{id}$ and $s_i = \min_{d\in D} f_{di}$.
\end{proposition}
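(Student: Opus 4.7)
The plan is to show that each of the three inequalities is valid for every integer feasible solution of $\mathcal F_1$; since the $x$ variables are binary, it suffices to consider the case $x_{ij}=1$ and argue that the stated bound on $z_{ij}$ holds. The interpretation is crucial: from the flow-conservation constraints \eqref{eq:f1-fuel} together with \eqref{eq:f1-fuel3}, $z_{ij}$ equals the cumulative fuel consumed by the vehicle from the moment it last departed a depot up to the moment it arrives at $j$ along the edge $(i,j)$. I would open the argument by recording this interpretation carefully and noting the standing consequence of $c_{ij} = K f_{ij}$ together with triangle inequality on $c$: the fuel costs $f_{ij}$ themselves satisfy the triangle inequality, so fuel along any path between two vertices is at least the direct fuel between them.

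For \eqref{eq:f2-fuel1}, assume $x_{ij}=1$ with $j\in T$. Because $j$ is a target, the route must continue past $j$ and eventually reach some depot $d^*\in D$, say via the subpath $j\to v_1 \to \cdots \to v_\ell \to d^*$. The total fuel used on this subpath equals $f_{jv_1} + \cdots + f_{v_\ell d^*}$, which by the triangle inequality on $f$ is at least $f_{jd^*} \geq t_j$. Since the fuel consumed between two consecutive depot visits cannot exceed $F$, we get $z_{ij} + f_{jd^*\text{-path}} \leq F$, which yields $z_{ij} \leq F - t_j$ and hence \eqref{eq:f2-fuel1}. Inequality \eqref{eq:f2-fuel2} is just the specialization of the original bound \eqref{eq:f1-fuel2} to arcs terminating at a depot, so it requires no separate argument.

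For \eqref{eq:f2-fuel3}, assume $x_{ij}=1$ with $i\in T$. The vehicle reached $i$ along some subpath $d^{**}\to u_1 \to \cdots \to u_m \to i$ from the last depot it visited. By the same triangle-inequality argument on fuel, the cumulative fuel consumed up to $i$ is at least $f_{d^{**}i}\geq s_i$. But $z_{ij}$ equals this cumulative fuel plus $f_{ij}$ (this is exactly what the flow conservation \eqref{eq:f1-fuel} enforces telescopically), giving $z_{ij}\geq s_i + f_{ij}$. The main obstacle I anticipate is writing down the telescoping step cleanly so that it is transparent that the flow variable $z_{ij}$ really does track cumulative fuel since the last depot; once this identification is established, both the upper bound \eqref{eq:f2-fuel1} and the lower bound \eqref{eq:f2-fuel3} reduce to short applications of triangle inequality along the segment of the tour that respectively follows and precedes the arc $(i,j)$.
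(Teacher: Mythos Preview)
Your argument is correct and follows the same route as the paper's own proof: interpret $z_{ij}$ as cumulative fuel since the last depot, then bound it above by observing at least $t_j$ more fuel is needed after $j$, and below by observing at least $s_i$ fuel was already spent reaching $i$. You are simply more explicit than the paper about two points it leaves implicit --- that the triangle inequality on $c$ transfers to $f$ (so a multi-edge subpath from $j$ to the next depot costs at least $t_j$), and that the flow-conservation constraints force $z$ to telescope into cumulative fuel --- but these refinements do not change the approach.
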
 
\begin{proof}
When $j$ is a depot, the constraints in \eqref{eq:f2-fuel2} and \eqref{eq:f1-fuel2} coincide. We now discuss the case when both $i$ and $j$ are targets. When $x_{ij} = 1$, any vehicle that traverses this edge $(i,j)$ consumes at least $(s_i + f_{ij})$ amount of fuel. As a result, the constraint in \eqref{eq:f2-fuel3} strengthens the lower bound of $z_{ij}$ in \eqref{eq:f1-fuel2}. Similarly, the total fuel consumed by any vehicle that traverses the edge $(i,j)$ cannot be greater that $(F - t_j)$, where $t_j$ is the minimum amount of fuel required by any vehicle to reach a depot from target $j$. Therefore, the constraint in \eqref{eq:f2-fuel1} strengthens the upper bound of $z_{ij}$ in \eqref{eq:f1-fuel2}.
\end{proof}
Hence, the second arc-based formulation is as follows:
\begin{flalign}
&(\mathcal F_2) \quad \text{Minimize} \quad \sum_{(i,j)\in E} c_{ij} x_{ij} \notag& \\
&\text{subject to: \eqref{eq:f1-degree-d} -- \eqref{eq:f1-fuel}, \eqref{eq:f1-fuel3} -- \eqref{eq:f1-integer}, and \eqref{eq:f2-fuel1} -- \eqref{eq:f2-fuel3}.} \notag & 
\end{flalign}
\begin{corollary} \label{cor:LP} $\operatorname{opt}(\mathcal F_2^L) \geq \operatorname{opt}(\mathcal F_1^L)$. \hfill \qed
\end{corollary}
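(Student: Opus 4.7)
The plan is to prove the corollary by showing that the feasible region of $\mathcal F_2^L$ is contained in the feasible region of $\mathcal F_1^L$. Since both linear programs minimize the same linear objective $\sum_{(i,j) \in E} c_{ij} x_{ij}$, containment of feasible regions immediately yields the desired inequality on optimal values.

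To carry this out, I would first observe that $\mathcal F_1^L$ and $\mathcal F_2^L$ share the constraints \eqref{eq:f1-degree-d}--\eqref{eq:f1-fuel}, \eqref{eq:f1-fuel3}, and the LP-relaxed variable bounds $0 \leq x_{ij} \leq 1$ from \eqref{eq:f1-integer}. The only structural difference between the two LPs is that \eqref{eq:f1-fuel2} (giving $0 \leq z_{ij} \leq F x_{ij}$) is replaced in $\mathcal F_2^L$ by the three inequalities \eqref{eq:f2-fuel1}--\eqref{eq:f2-fuel3}. Hence it suffices to verify that any $(x,z)$ satisfying \eqref{eq:f2-fuel1}--\eqref{eq:f2-fuel3} together with \eqref{eq:f1-fuel3} and $x_{ij} \geq 0$ automatically satisfies \eqref{eq:f1-fuel2}.

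For the upper bound $z_{ij} \leq F x_{ij}$, I would split on whether $j \in T$ or $j \in D$: in the first case \eqref{eq:f2-fuel1} gives $z_{ij} \leq (F - t_j) x_{ij} \leq F x_{ij}$ because $t_j = \min_{d\in D} f_{jd} \geq 0$, and in the second case \eqref{eq:f2-fuel2} is precisely $z_{ij} \leq F x_{ij}$. For the lower bound $z_{ij} \geq 0$, I would split on whether $i \in T$ or $i \in D$: if $i \in T$, then \eqref{eq:f2-fuel3} gives $z_{ij} \geq (s_i + f_{ij}) x_{ij} \geq 0$ since $s_i, f_{ij} \geq 0$ and $x_{ij} \geq 0$; if $i \in D$, then \eqref{eq:f1-fuel3} gives $z_{ij} = f_{ij} x_{ij} \geq 0$. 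This covers all cases and establishes the containment.

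There is no real obstacle here; the proof is essentially a bookkeeping check that the three constraints introduced via Proposition~\ref{prop:strengthen} are indeed \emph{tightenings} rather than replacements of \eqref{eq:f1-fuel2}, which is already implicit in the language of Proposition~\ref{prop:strengthen} but worth verifying explicitly because the roles $i \in T$ vs.\ $i \in D$ and $j \in T$ vs.\ $j \in D$ are handled by different constraints in $\mathcal F_2^L$. Once the case analysis is completed, the corollary follows without any further argument.
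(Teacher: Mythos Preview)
Your proposal is correct and follows exactly the approach the paper intends: the paper gives no explicit proof (the corollary is stated with an immediate \qed), treating it as a direct consequence of Proposition~\ref{prop:strengthen}, which asserts that \eqref{eq:f2-fuel1}--\eqref{eq:f2-fuel3} tighten \eqref{eq:f1-fuel2}. Your case analysis simply spells out that tightening, verifying that the feasible region of $\mathcal F_2^L$ is contained in that of $\mathcal F_1^L$ under a common objective.
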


\subsection{Node-based formulations \label{subsec:nodebased}} 
In this section, we present a node-based formulation for the FCMVRP based on the models for the distance constrained VRP in \cite{Desrochers1991, Kara2010}. For the node based formulation, apart from the binary variable $x_{ij}$ for each edge $(i,j) \in E$, we have an auxiliary variable $u_i$ for each vertex $i$, that indicates the amount of fuel spent by a vehicle when it reaches the vertex $i$. We assume $u_d = 0$ as the vehicles are refueled to their capacity when they reach a depot. In addition, we will also use the following two parameters: $t_i = \min_{d\in D} f_{id}$ and $s_i = \min_{d\in D} f_{di}$ for every vertex $i \in V$. For any $d \in D$, $t_d = 0$ and $s_d = 0$. Using the above notations, the formulation $\mathcal F_3$ is given as follows:
\begin{flalign}
&(\mathcal F_3) \quad \text{Minimize} \quad \sum_{(i,j)\in E} c_{ij} x_{ij} \notag& \\
&\text{subject to: \eqref{eq:f1-degree-d}, \eqref{eq:f1-degree-t}, and \eqref{eq:f1-integer}}, \notag & \\
& u_i - u_j + M x_{ij} \leq M - f_{ij} \quad \forall i \in V, j \in T, \label{eq:f3-fuel1}& \\
& u_i \geq s_i + \sum_{d\in D} (f_{di} - s_i)x_{di} \quad \forall i \in T \label{eq:f3-fuel2},  \text{ and} & \\
& u_i \leq F - t_i - \sum_{d \in D} (f_{id}-t_i)x_{id} \quad \forall i \in T. \label{eq:f3-fuel3} & 
\end{flalign}
The constraint in Eq. \eqref{eq:f3-fuel1} serves both as sub tour elimination and fuel constraints. It eliminates sub tours of the targets and ensures any route that starts and ends at a depot consumes at most $F$ amount of fuel. This can be easily observed by aggregating the constraints for any sub tour of the targets and for any route starting and ending at a depot \cite{Desrochers1991}. The value of $M$ in the constraint is given by $M = \max_{(i,j)\in E} \{F - s_j -t_i + f_{ij}\}$. The constraints in Eqs. \eqref{eq:f3-fuel2} and \eqref{eq:f3-fuel3} specify the upper and lower bounds on $u_i$, for every vertex $i$. The following proposition strengthens the fuel constraints and the bounds on $u_i$. 

\begin{proposition} \label{prop:lifting} The inequalities in \eqref{eq:f3-fuel1}, \eqref{eq:f3-fuel2}, and \eqref{eq:f3-fuel3} can be strengthened as follows:
\begin{flalign}
& u_i - u_j + M x_{ij} + (M-f_{ij}-f_{ji})x_{ji} \leq M - f_{ij} \notag & \\
& \hspace{35ex} \forall i,j \in T \label{eq:f3-fuel4}, &\\
& u_i \geq \sum_{j \in V} (s_j + f_{ji}) x_{ji} \quad \forall i \in T \label{eq:f3-fuel5}, & \\
& u_i \leq F - \sum_{j \in V} (t_j + f_{ij}) x_{ij} \quad \forall i \in T, \text{ and} \label{eq:f3-fuel6} & \\
& u_i \leq F - t_i - \sum_{d\in D} (F - t_i - f_{di}) x_{di} \quad \forall i \in T. \label{eq:f3-fuel7} &
\end{flalign}
where, $x_{ii} = 0$ and $x_{ij} = 0$ whenever $s_i + f_{ij} + t_j > F$.
\end{proposition}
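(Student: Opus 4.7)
My plan is to verify the four lifted inequalities one at a time on integer feasible solutions, relying throughout on two recurring facts: the degree constraints \eqref{eq:f1-degree-t} force each target to have a unique predecessor and a unique successor, and the fuel recursion $u_j = u_i + f_{ij}$ must hold whenever $x_{ij}=1$ and $j$ is a target (this follows from \eqref{eq:f3-fuel1} combined with the upper and lower bounds \eqref{eq:f3-fuel2}--\eqref{eq:f3-fuel3}, which together pin down $u_j$ once $u_i$ and the traversed arc are fixed). Each of the four arguments then reduces to a short case split on which variables among $x_{ij}$, $x_{ji}$, $x_{di}$ are active.

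For the lifted MTZ inequality \eqref{eq:f3-fuel4} I would split on the pair $(x_{ij},x_{ji})$ with $i,j\in T$. The combination $x_{ij}=x_{ji}=1$ is excluded by the degree constraints at $i$ and $j$. For $x_{ij}=x_{ji}=0$ the inequality reduces to $u_i - u_j \le M - f_{ij}$, which follows from $u_i \le F - t_i$, $u_j \ge s_j$, and the choice of $M = \max_{(i,j)\in E}\{F - s_j - t_i + f_{ij}\}$. For $x_{ij}=1$ the fuel recursion gives $u_j - u_i = f_{ij}$, so both sides equal $M - f_{ij}$; for $x_{ji}=1$ the symmetric recursion gives $u_i - u_j = f_{ji}$, and substitution shows the inequality is again tight. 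For \eqref{eq:f3-fuel5} I would use \eqref{eq:f1-degree-t} to note that exactly one $x_{ji}$, say for the unique predecessor $j^\star$, equals $1$; the fuel recursion yields $u_i = u_{j^\star} + f_{j^\star i} \ge s_{j^\star} + f_{j^\star i}$, using $u_{j^\star}\ge s_{j^\star}$ (trivial for depots where $s_d=0$, and by \eqref{eq:f3-fuel2} for targets). Inequality \eqref{eq:f3-fuel6} is proved by the mirror argument on the unique successor together with the upper bound $u\le F - t$. Finally, for \eqref{eq:f3-fuel7} I would split on whether some depot directly precedes $i$: if $x_{d^\star i}=1$ for a unique $d^\star\in D$, then $u_i = f_{d^\star i}$ and the right-hand side telescopes to exactly $f_{d^\star i}$; if no depot precedes $i$, the depot sum vanishes and the inequality collapses to the original \eqref{eq:f3-fuel3}.

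The main obstacle I anticipate is not any single case but the combinatorial bookkeeping inside \eqref{eq:f3-fuel4}: I must verify that the lifted coefficient $M - f_{ij} - f_{ji}$ is the largest one preserving validity, and that the preprocessing rule $x_{ij}=0$ whenever $s_i+f_{ij}+t_j>F$ is what keeps the entire case analysis internally consistent, since it guarantees every surviving arc admits a feasible completion to a depot-to-depot leg of at most $F$ units of fuel. Once this bound is settled, the remaining three inequalities follow as essentially one-line aggregations of the sum under the degree constraints, provided one is careful with the boundary convention $s_d=t_d=0$ at depots.
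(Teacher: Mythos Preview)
Your proposal is correct and mirrors the paper's argument closely: the paper also proceeds constraint by constraint, treating \eqref{eq:f3-fuel4}--\eqref{eq:f3-fuel7} as sequential liftings of \eqref{eq:f3-fuel1}--\eqref{eq:f3-fuel3} and justifying each coefficient by exactly the same case split on which of $x_{ij},x_{ji},x_{di}$ is active, together with the relation $u_j=u_i+f_{ij}$ and the degree constraints \eqref{eq:f1-degree-t}. One small point of phrasing: the equality $u_j=u_i+f_{ij}$ is not forced by \eqref{eq:f3-fuel1}--\eqref{eq:f3-fuel3} for every feasible $(x,u)$ in $\mathcal F_3$ (those constraints give only $u_j\ge u_i+f_{ij}$); rather, it is the natural choice of $u$ associated with any feasible FCMVRP route, and that is all that is needed to certify validity---the paper uses it in the same way without further comment.
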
 
\begin{proof}
The constraint in Eq. \eqref{eq:f3-fuel4} can be obtained by lifting the variable $x_{ji}$ in Eq. \eqref{eq:f3-fuel1}. A constraint is said to be ``valid'' if it does not remove any feasible solution to the FCMVRP. We compute the value of the coefficient $\alpha$ that makes the following constraint valid: $$u_i - u_j + M x_{ij} + \alpha x_{ji} \leq M - f_{ij}.$$ The equation is valid when $x_{ji} = 0$, as it reduces to \eqref{eq:f3-fuel1}. When $x_{ji} = 1$, we have $x_{ij} = 0$ and $u_j + f_{ji} = u_i$. Hence, the best value of $\alpha$ that makes the equation valid is given by $M - f_{ij} - f_{ji}$. 

Similarly, Eq. \eqref{eq:f3-fuel5} can be obtained by lifting every $x_{ji}$ variable for $j\in T$ in any order. We will illustrate the lifting procedure for one of the $x_{ji}$ variables. This involves computing the coefficient $\alpha$ that makes the following constraint valid: $$u_i \geq s_i + \sum_{d\in D} (f_{di} - s_i)x_{di} + \alpha x_{ji}.$$ The above equation is valid when $x_{ji} = 0$, and when $x_{ji} = 1$, we have $x_{di} = 0$ and $\alpha \leq u_i - s_i$. The best value of $\alpha$ that does not remove any feasible FCMVRP is hence given by $s_j + f_{ji} - s_i$. Similarly, the coefficients of the other $x_{ji}$ variables can be computed. The resulting constraint is given by $$u_i \geq s_i + \sum_{j \in V} (s_j + f_{ji}-s_i) x_{ji} \quad \forall i \in V.$$ In the above equation, $s_j = 0$ for $j\in D$. The above equation reduces to Eq. \eqref{eq:f3-fuel5} due to the degree constraints in \eqref{eq:f1-degree-t}. The constraints in Eq. \eqref{eq:f3-fuel6} are similarly obtained from \eqref{eq:f3-fuel3} by lifting the $x_{ij}$ variable for every $j\in T$. The proof is omitted as it is similar to the previous ones in the proposition. The constraints in Eq. \eqref{eq:f3-fuel7} are valid bounding constraints for the FCMVRP when the target $i$ is the first target that is visited by any vehicle as it leaves the depot. In this case, the Eq. \eqref{eq:f3-fuel3} reduces to $u_i \leq F-t_i$. We further strengthen this constraint by lifting the variable $x_{di}$ for every $d\in D$. The lifting coefficient $\alpha$ for $x_{di}$ takes the value $-(F-t_i-f_{di})$ and the resulting constraint is given by Eq. \eqref{eq:f3-fuel7}.
\end{proof}

Hence, the second node-based formulation is as follows:
\begin{flalign}
&(\mathcal F_4) \quad \text{Minimize} \quad \sum_{(i,j)\in E} c_{ij} x_{ij} \notag& \\
&\text{subject to: \eqref{eq:f1-degree-d}, \eqref{eq:f1-degree-t}, \eqref{eq:f1-integer}, and \eqref{eq:f3-fuel4} -- \eqref{eq:f3-fuel7}} \notag. & 
\end{flalign}

\begin{corollary} \label{cor:LPnode} $\operatorname{opt}(\mathcal F_4^L) \geq \operatorname{opt}(\mathcal F_3^L)$. \hfill \qed
\end{corollary}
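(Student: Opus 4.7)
The plan is to argue that every LP-feasible solution of $\mathcal F_4$ is also LP-feasible for $\mathcal F_3$. Since the two formulations share the same objective and the same $x$-variables, this containment of feasible regions immediately yields $\operatorname{opt}(\mathcal F_4^L) \geq \operatorname{opt}(\mathcal F_3^L)$. Because $\mathcal F_4$ differs from $\mathcal F_3$ only in the fuel/sub-tour block, the task reduces to showing that each of \eqref{eq:f3-fuel4}--\eqref{eq:f3-fuel7} is at least as restrictive as the corresponding inequality in \eqref{eq:f3-fuel1}--\eqref{eq:f3-fuel3}.

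First I would compare \eqref{eq:f3-fuel4} against \eqref{eq:f3-fuel1}. The only difference is the extra left-hand-side term $(M-f_{ij}-f_{ji})\,x_{ji}$, which is non-negative in the LP relaxation: $x_{ji}\geq 0$, and the definition $M=\max_{(i,j)\in E}\{F-s_j-t_i+f_{ij}\}$ together with the implicit feasibility bound $s_i+f_{ij}+t_j\leq F$ on any usable edge forces $M\geq f_{ij}+f_{ji}$. Dropping this non-negative term from the left of \eqref{eq:f3-fuel4} therefore recovers \eqref{eq:f3-fuel1}. An analogous argument shows that \eqref{eq:f3-fuel7} is obtained from \eqref{eq:f3-fuel3} by subtracting a non-negative quantity (the coefficients $F-t_i-f_{di}\geq 0$) from the right-hand side, so \eqref{eq:f3-fuel7} implies \eqref{eq:f3-fuel3}.

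Next, to compare \eqref{eq:f3-fuel5} with \eqref{eq:f3-fuel2}, I would split the sum $\sum_{j\in V}(s_j+f_{ji})x_{ji}$ into its depot and target contributions, use $s_d=0$ for $d\in D$, and apply the in-degree equality $\sum_{j\in V}x_{ji}=1$ from \eqref{eq:f1-degree-t} to rewrite the right-hand side of \eqref{eq:f3-fuel2} as $s_i\sum_{j\in T}x_{ji}+\sum_{d\in D}f_{di}x_{di}$. The remaining inequality reduces to $\sum_{j\in T}(s_j+f_{ji}-s_i)\,x_{ji}\geq 0$, which holds termwise once one observes that $s_j+f_{ji}\geq s_i$ by the triangle inequality $f_{dj}+f_{ji}\geq f_{di}$ (valid for fuel because $c=Kf$ and $c$ satisfies the triangle inequality) applied to the depot $d$ minimising $f_{dj}$. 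The same template, with $t$ replacing $s$ and out-degrees replacing in-degrees, shows that \eqref{eq:f3-fuel6} dominates \eqref{eq:f3-fuel3}.

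The main obstacle I anticipate is exactly this intermediate step: the lifted constraints are \emph{not} pointwise tighter than the originals, so the domination can only be established after folding in the degree equalities \eqref{eq:f1-degree-t} and the triangle inequality on the fuel values. Once these four implications are in hand, any point in the feasible region of $\mathcal F_4^L$ satisfies every constraint of $\mathcal F_3^L$, and the stated inequality on the optimal values follows at once.
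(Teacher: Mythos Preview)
Your plan is sound and is essentially what the paper takes for granted: the corollary is stated with nothing but a \qed, the intended justification being Proposition~\ref{prop:lifting}, which obtains the constraints \eqref{eq:f3-fuel4}--\eqref{eq:f3-fuel7} of $\mathcal F_4$ by lifting those of $\mathcal F_3$. Your explicit verification that the lifted inequalities dominate the originals in the LP (using the degree equalities and the triangle inequality for $f$, so that all the lifting coefficients carry the correct sign) is precisely the content hidden behind that \qed.

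Two technical points deserve tightening. First, \eqref{eq:f3-fuel4} is indexed only over $i,j\in T$, whereas \eqref{eq:f3-fuel1} also ranges over $i\in D$; you should cover that case separately. It does follow from \eqref{eq:f3-fuel5}: with $u_d=0$ and $s_d=0$ the depot term in \eqref{eq:f3-fuel5} already gives $u_j\ge f_{dj}x_{dj}\ge f_{dj}-M(1-x_{dj})$, which is exactly \eqref{eq:f3-fuel1} for $i=d$ once $M\ge f_{dj}$. Second, your claim that \eqref{eq:f3-fuel7} implies \eqref{eq:f3-fuel3} is not correct as stated, because the two inequalities involve \emph{different} variables ($x_{di}$ in \eqref{eq:f3-fuel7} versus $x_{id}$ in \eqref{eq:f3-fuel3}); one is not obtained from the other by merely adjusting coefficients on the right-hand side. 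This slip is harmless for the corollary, however, since you already establish that \eqref{eq:f3-fuel6} dominates \eqref{eq:f3-fuel3}; constraint \eqref{eq:f3-fuel7} is simply an additional valid inequality in $\mathcal F_4$ that can only shrink the feasible region further.
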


\section{Computational results} \label{sec:results}
In this section, we discuss the computational performance of the four formulations presented in the previous section. The mixed integer linear programs were implemented in Java, using the traditional branch-and-cut framework of CPLEX version 12.4. All the simulations were performed on a Dell Precision T5500 workstation (Intel Xeon E5630 processor @2.53 GHz, 12 GB RAM). The computation times reported are expressed in seconds, and we imposed a time limit of 3,600 seconds for each run of the algorithm. The performance of the algorithm was tested with randomly generated test instances. \\

\noindent {\it Instance generation}

The problem instances were randomly generated in a square grid of size [100,100] with 5 fixed depot locations. The number of targets varies from $10$ to $40$ in steps of five, while their locations were uniformly distributed in the square grid; for each $|T| \in \{10,15,20,25,30,25,40\}$, we generate five random instances. Each depot contains a vehicle. The travel costs and the fuel consumed to travel between any pair of vertices are assumed to be equal to the euclidean distances between the pair. For each of these problems, we generate four possible fuel capacities $F$ as a function of the the distance to the farthest target from any depot $\lambda$. The fuel capacity $F$ of the vehicles gets the values $2.25\lambda$, $2.5\lambda$, $2.75\lambda$ and $3\lambda$. In total, we generate $140$ instances and run the branch-and-cut algorithm for all the formulations.  \\

Tables \ref{tab:1} and \ref{tab:2}, and Fig. \ref{fig:times}--\ref{fig:LB} summarize the computational behavior of the algorithms for all the $140$ instances. The following nomenclature is used throughout the rest of the paper:\medskip{}

\noindent $\#$: instance number;

\noindent $\operatorname{opt}(\mathcal F_i^L)$: linear programming relaxation solution for formulation $i$;

\noindent $n$: instance size \emph{i.e.}, number of targets in the instance;

\noindent \%-LB: percentage LB/opt, where LB is the objective value of the linear programming relaxation computed at the root node of the branch and bound tree and opt is the cost of the optimal solution to the instance;

\noindent total: total number of test instances of a given size;

\noindent succ: number of instances for which optimal solutions were computed within a time limit of 3,600 seconds.  \\

\noindent Table \ref{tab:1} compares the cost of the linear programming (LP) relaxations of the four formulations presented in Sec. \ref{sec:formulation} for the 40 target instances.  The results in table \ref{tab:1} provide an empirical comparison of the formulations presented in \ref{sec:formulation}; the observed behavior is expected because the formulations $\mathcal F_2$ and $\mathcal F_4$ are strengthened versions of $\mathcal F_1$ and $\mathcal F_3$, respectively (see corollaries \ref{cor:LP} and \ref{cor:LPnode}). As for the LP relaxations of formulations $\mathcal F_2$ and $\mathcal F_4$, it is difficult to conclude that one is better than the other since $\mathcal F_4$ produces better relaxation values than $\mathcal F_2$ only for 60\% of the instances. Hence, the rest of the computational results compares the formulations $\mathcal F_2$ and $\mathcal F_4$.  

\begin{table}
\centering
\caption{Cost of the LP relaxation for the 40 target instances.}
\label{tab:1}
\begin{tabular}{lrrrr} 
\toprule 
$\#$ & $\operatorname{opt}(\mathcal F_1^L)$ & $\operatorname{opt}(\mathcal F_2^L)$ & $\operatorname{opt}(\mathcal F_3^L)$ & $\operatorname{opt}(\mathcal F_4^L)$ \\ 
\midrule 
1 & 496.42 & 509.24 & 426.17 & 518.00 \\
2 &	487.31 & 496.39 & 426.17 & 518.00 \\
3 &	480.55 & 487.40 & 426.17 & 518.00 \\ 
4 &	475.23 & 480.33 & 426.17 & 518.00 \\
5 &	444.35 & 458.01 & 389.08 & 434.00 \\ 
6 &	435.45 & 445.70 & 389.08 & 434.00 \\ 
7 &	428.44 & 436.47 & 389.08 & 434.00 \\
8 &	423.06 & 429.97 & 389.08 & 434.00 \\ 
9 &	396.10 & 403.96 & 367.11 & 452.00 \\ 
10 & 392.87 & 398.72 & 367.11 & 452.00 \\
11 & 390.42 & 394.66 & 367.11 & 452.00 \\
12 & 388.40 & 391.85 & 367.11 & 452.00 \\ 
13 & 481.22 & 493.64 & 427.04 & 461.00 \\
14 & 469.76 & 479.81 & 427.04 & 461.00 \\
15 & 461.16 & 469.20 & 427.04 & 461.00 \\
16 & 454.80 & 461.47 & 427.04 & 461.00 \\
17 & 503.19 & 516.58 & 461.07 & 523.00 \\
18 & 494.98 & 504.84 & 461.07 & 523.00 \\
19 & 489.64 & 496.31 & 461.07 & 523.00 \\ 
20 & 485.92 & 489.99 & 461.07 & 523.00 \\
\bottomrule 
\end{tabular} 
\end{table}

Table \ref{tab:2} shows the number of instances of different sizes solved to optimality by the formulations $\mathcal F_2$ and $\mathcal F_4$ within the time limit of 3600 seconds. The plot in Fig. \ref{fig:times} shows the average time taken by the two formulations to compute the optimal solution to the FCMVRP. The table \ref{tab:2} and Fig. \ref{fig:times} indicate that the arc-based formulation $\mathcal F_2$ outperforms the node-based formulation $\mathcal F_4$ for the larger instances. For the smaller sized instances, it is difficult to differentiate between the two formulations. The plot in Fig. \ref{fig:LB} shows the percentage LB/opt for both the formulations (LB is the objective value of the linear programming relaxation computed at the root node of the branch and bound tree and opt is the cost of the optimal solution to the instance; for the instances not solved to optimality, opt represents the cost of the best feasible solution obtained at the end of 3,600 seconds). We observe that the \%LB is consistently better for formulation $\mathcal F_2$. This plot also provides empirical evidence to the claim that the arc based formulation $\mathcal F_2$ outperforms the node based formulation $\mathcal F_4$. 

\begin{table}
\centering
\caption{Comparison of formulations $\mathcal F_2$ and $\mathcal F_4$.}
\label{tab:2}
\begin{tabular}{lrrr} 
\toprule 
 & & $\mathcal F_2$ & $\mathcal F_4$ \\
 \cline{3-4}
 $n$ & total & succ & succ \\
\midrule 
10 & 20 & 20 & 20\\
15 & 20 & 20 & 20\\
20 & 20 & 20 & 20\\
25 & 20 & 20 & 14\\ 
30 & 20 & 20 & 5\\
35 & 20 & 20 & 15\\ 
40 & 20 & 19 & 1\\
\bottomrule 
\end{tabular} 
\end{table}

\begin{figure}
\begin{tikzpicture}
\begin{axis}[
	x tick label style={
		/pgf/number format/1000 sep=},
	ylabel=Time (seconds),
	xlabel=$n$,
	enlargelimits=0.05,
	legend style={at={(0.2,0.95)}, draw=none,
	anchor=north,legend columns=-1},
	ybar interval=0.5,
]
\addplot 
	coordinates {(15,0) (20,7) (25,32) (30,42) (35,158) (40,709) (45,709)};
\addplot 
	coordinates {(15,1) (20,291) (25,1826) (30,3236) (35,923) (40,3470) (45,709)};
\legend{$\mathcal F_2$,$\mathcal F_4$}
\end{axis}
\end{tikzpicture}
\caption{Average time taken to compute the optimal solution.}
\label{fig:times}
\end{figure}
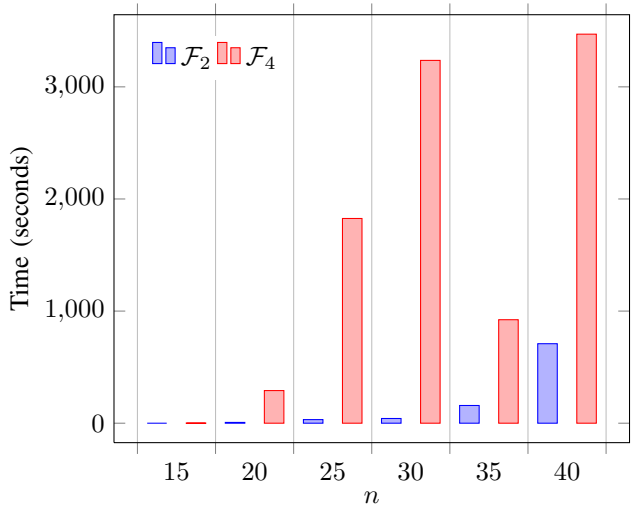

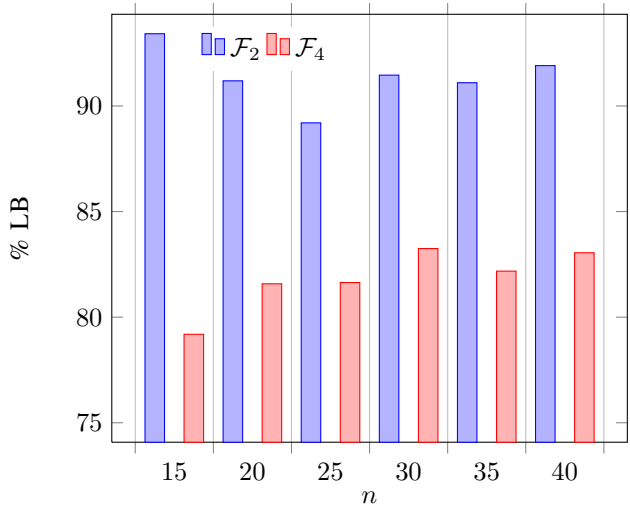
\begin{figure}
\begin{tikzpicture}
\begin{axis}[
	x tick label style={
		/pgf/number format/1000 sep=},
	ylabel=\% LB,
	xlabel=$n$,
	enlargelimits=0.05,
	legend style={at={(0.3,0.97)},draw=none,
	anchor=north,legend columns=-1},
	ybar interval=0.5,
]
\addplot 
	coordinates {(15,93.42) (20,91.19) (25,89.20) (30,91.46) (35,91.10) (40,91.91) (45,75)};
\addplot 
	coordinates {(15,79.19) (20,81.58) (25,81.64) (30,83.25) (35,82.18) (40,83.05) (45,75)};
\legend{$\mathcal F_2$,$\mathcal F_4$}
\end{axis}
\end{tikzpicture}
\caption{Average \% LB.}
\label{fig:LB}
\end{figure}

\section{Conclusions and future work \label{sec:conclusion}}
In this paper, we have presented four different mixed integer linear programming formulations for the multiple depot fuel constrained multiple vehicle routing problem. The problem arises frequently in the context of path planning for UAVs, green vehicle routing and routing electric vehicles. The formulations have been compared both analytically and empirically, and it is observed that a strengthened arc based formulation ($\mathcal F_2$) performs the best in terms of computing optimal solutions to the problem. Computational experiments on a large number of test instances corroborates this observation. Future work can be directed towards developing similar mixed integer linear programming formulations and branch-and-cut algorithms to solve a heterogeneous variant of the problem \emph{i.e.,} with vehicles having different fuel capacities. 

\bibliographystyle{IEEEtran}
\bibliography{references}

\end{document}